\newtheorem{theorem}{Theorem}[section]
\newtheorem{corollary}[theorem]{Corollary}
\newtheorem{lemma}[theorem]{Lemma}
\newtheorem{proposition}[theorem]{Proposition}
\newtheorem{claim}[theorem]{Claim}
\newtheorem{definition}[theorem]{Definition}
\def\squarebox#1{\hbox to #1{\hfill\vbox to #1{\vfill}}}
\newcommand{\qed}{\hspace*{\fill}
\vbox{\hrule\hbox{\vrule\squarebox{.667em}\vrule}\hrule}\smallskip}
\newenvironment{proof}{\noindent{\bf Proof:~~}}{\(\qed\)}
\begin{document}
\title{Faster and Simpler Sketches of Valuation Functions}

\author{Keren Cohavi \and Shahar Dobzinski\thanks{Weizmann Institute of Science. Emails: \texttt{\{keren.cohavi, shahar.dobzinski\}@weizmann.ac.il}.}}

\maketitle

\begin{abstract}
We present fast algorithms for sketching valuation functions. Let $N$ ($|N|=n$) be some ground set and $v:2^N\rightarrow \mathbb R$ be a function. We say that $\tilde v:2^N\rightarrow \mathbb R$ is an \emph{$\alpha$-sketch} of $v$ if for every set $S$ we have that $\frac {v(S)} {\alpha} \leq \tilde v(S) \leq v(S)$ and $\tilde v$ can be described in $poly(n)$ bits.

Goemans et al. [SODA'09] showed that if $v$ is submodular then there exists an $\tilde O(\sqrt n)$-sketch that can be constructed using polynomially many value queries (this is essentially the best possible, as Balcan and Harvey [STOC'11] show that no submodular function admit an $n^{\frac 1 3 - \epsilon}$-sketch). Based on their work, Balcan et al. [COLT'12] and Badanidiyuru et al [SODA'12] show that if $v$ is subadditive then there exists an $\tilde O(\sqrt n)$-sketch that can be constructed using polynomially many demand queries. All previous sketches are based on complicated geometric constructions. The first step in their constructions is proving the existence of a good sketch by finding an ellipsoid that ``approximates'' $v$ well (this is done by applying John's theorem to ensure the existence of an ellipsoid that is ``close'' to the polymatroid that is associated with $v$). The second step is to show that this ellipsoid can be found efficiently, and this is done by repeatedly solving a certain convex program to obtain better approximations of John's ellipsoid.

In this paper, we give a significantly simpler, non-geometric proof for the existence of good sketches, and utilize the proof to obtain much faster algorithms that match the previously obtained approximation bounds. Specifically, we provide an algorithm that finds $\tilde O(\sqrt n)$-sketch of a submodular function with only $\tilde O(n^\frac{3}{2})$ value queries, and we provide an algorithm that finds $\tilde O(\sqrt n)$-sketch of a subadditive function with $O(n)$ demand and value queries.
\end{abstract}

\section{Introduction}
In this paper we tackle the problem of sketching valuation functions. Consider a function $v:2^N\rightarrow \mathbb R_+$ that is defined over a ground set of $N$, $|N|=n$, elements. Representing $v$ requires specifying $2^n$ numbers, which is too large for most problems. For example, if the function is submodular then we usually require that optimization algorithms will run in time $poly(n)$.

One way to overcome the representation obstacle is to assume that $v$ is given as a black box and develop algorithms that query $v$ no more than $poly(n)$ times. Goemans et al \cite{GHIM09} suggest the following alternative approach: instead of $v$ we can use $\tilde v$ -- a function in which for all $S$, $\tilde v(S)$ is close to $v(S)$ and $\tilde v$ can be succinctly described. More formally, $\tilde v:2^N\rightarrow \mathbb R_+$ is an $\alpha$-\emph{sketch}\footnote{The paper \cite{GHIM09} says that in this case $\tilde v$ \emph{approximates $v$ everywhere}, but in this paper we use the term \emph{sketch} that was suggested later in \cite{BDFK+12}.} of $v$ if $\tilde v$ can be represented in polynomial space and for every bundle $S$ we have that $\frac {v(S)} \alpha \leq \tilde v(S) \leq v(S)$.

Goemans et al \cite{GHIM09} show that not only that every submodular function $v$ admits an $\tilde O(\sqrt n)$-sketch, but also that if $v$ is given to us as a black box we can find this sketch by making only $poly(n)$ value queries (given $S$, what is $v(S)$?). Their construction is technical and involved. Very roughly speaking, existence of a good sketch is proved by considering a polymatroid associated with $v$, and finding an ellipsoid $E$ that approximates well the polymatroid using John's theorem. Algorithmically finding an ellipsoid that approximates $E$ is done by repeatedly solving several instances of a certain convex optimization problem, to get closer and closer to $E$. We refer the reader to \cite{GHIM09} for full details on their approach.

Balcan and Harvey \cite{BH11} prove that the construction above is essentially the best possible, in the sense that there are submodular functions (which are even rank functions of a certain family of matroids) for which no $n^{\frac 1 3-\epsilon}$-sketch exists. Thus, the focus of this research line has switched to finding good sketches for larger classes of valuations\footnote{A very related question is learning valuation functions -- see, e.g., \cite{BFIW11,BH11,FV13, FK14}.}. Balcan et al. \cite{BFIW11} and Badanidiyuru et al \cite{BDFK+12} show that $\tilde O(\sqrt n)$-sketches exist for the larger class of subadditive valuations. Furthermore, these sketches can be found using a polynomial number of demand queries (the power of value queries is weak in this setting -- it is known that with polynomially many value queries only trivial sketches of subadditive functions can be obtained \cite{BDFK+12}). These papers follow the basic ellipsoidal-based approach of \cite{GHIM09}, developing it even further to support subadditive functions.

Our contribution is two-fold. First, we present a radically different proof for all the above positive results. Our proof is much simpler, uses first principles only, and in particular avoids complicated geometric arguments. Moreover, although all of the above algorithmic sketches run formally in polynomial time, the hidden constants are very large. A concrete benefit of our proof is that it enables us to develop much faster sketching algorithms, hence joining the recent effort of obtaining faster versions of fundamental algorithms in submodular optimization (e.g, \cite{BV14, BBCL14}).

\begin{theorem} The following two statements hold:
\begin{itemize}
\item There exists an algorithm that finds $\tilde O(\sqrt n)$-sketch for any submodular function using $O\left(n^\frac{3}{2}\cdot \log^{3}\left(n\right)\right)$ value queries.
\item There exists an algorithm that finds $\tilde O(\sqrt n)$-sketch for any subadditive function using $O(n)$ value and demand queries.
\end{itemize}
\end{theorem}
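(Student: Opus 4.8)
The plan is to avoid polymatroids, John's theorem, and convex programming entirely, and instead to exhibit for every valuation an explicit sketch of the form $\tilde v(S)=\max_{j}\ell_j(S)$, a pointwise maximum of polynomially many ``elementary'' functions $\ell_j$, each of which is \emph{individually dominated by $v$}. Since every $\ell_j\le v$, the inequality $\tilde v\le v$ required of a sketch holds for free, and the task splits into two independent pieces: (i) a purely combinatorial \emph{existence} claim --- that $\tilde O(\sqrt n)$ such lower bounds already force $v(S)\le\tilde O(\sqrt n)\cdot\tilde v(S)$ for every one of the $2^n$ sets $S$ --- and (ii) an \emph{algorithmic} claim that these $\ell_j$ can be read off with the stated number of queries.

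For submodular $v$ I would first apply the standard reductions (assume $v(\emptyset)=0$, monotonize, cap at $v(N)$; each costs only a constant factor), and then use three kinds of elementary functions. First, the bound $S\mapsto\max_{i\in S}v(\{i\})$, stored outright with $n$ value queries, which handles any set owning an element that carries a $\tfrac1{\sqrt n}$ fraction of its value. Second, the single complementary affine bound $S\mapsto v(N)-\sum_{i\notin S}v(\{i\})$, which submodularity makes a genuine lower bound on $v(S)$ with non-positive intercept and which is tight at $S=N$, handling every set whose value is close to $v(N)$. Third, the additive functions $w^\pi$ built from the marginals along a uniformly random permutation $\pi$; submodularity puts $w^\pi$ in the polymatroid $P_v=\{x\ge 0:x(S)\le v(S)\ \forall S\}$, so these are legitimate lower bounds, and one costs $O(n)$ value queries. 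The technical core is to show that a random sample of $\tilde O(\sqrt n)$ permutations suffices: for a set $S$ that is neither dominated by a single element nor nearly all of $N$, one argues $w^\pi(S)\ge v(S)/\sqrt n$ with probability large enough to survive a union bound over the relevant family of sets. This is where a careful stratification of the $2^n$ sets --- by cardinality and by the profile of their marginal contributions --- together with an Azuma/martingale concentration bound for $w^\pi(S)$ (which is Lipschitz in $\pi$, with step size controlled by the ``no heavy element'' assumption) stands in for John's theorem. The count is $\tilde O(\sqrt n)$ additive functions at $O(n)$ queries each, i.e. $O(n^{3/2}\log^3 n)$ value queries, the polylog absorbing the boosting needed to make the existence claim hold with high probability.

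For subadditive $v$ value queries are provably useless, so the additive lower bounds must be harvested with demand queries. The point is that a demand query at a \emph{uniform} price $p$ returns a set $S_p$ with $v(S_p)-p|S_p|\ge v(T)-p|T|$ for all $T$, whence $\ell_p(S):=p\,|S\cap S_p|$ (or a slightly sharper truncated affine variant) is dominated by $v$, and $\ell_p$ together with the stored singleton values essentially certifies $v(S_p)$. Sweeping $p$ over a geometric scale of $O(\log n)$ values and extracting a bounded number of such clauses per scale gives $O(n)$ demand and value queries total; the existence argument is again the heavy/light dichotomy --- a ``light'' set spreads its value over more than $\sqrt n$ elements, precisely the situation a uniform-price demand query exposes --- and $\tilde O(\sqrt n)$ clauses suffice.

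The main obstacle in both parts is the existence claim (i), and specifically its ``middle regime'': sets $S$ that are simultaneously fragmented (no element is worth a $\tfrac1{\sqrt n}$ fraction) and of intermediate value (bounded away from both $0$ and $v(N)$). For such sets the natural per-sample success probability of a random elementary lower bound is only about $1/\sqrt n$, whereas a crude union bound ranges over as many as $\binom{n}{\sqrt n}=e^{\tilde\Theta(\sqrt n)}$ candidates; reconciling these is exactly what forces the fine stratification by ``where the value sits,'' and that is where the real work lies. Everything else --- the reductions, the validity of each $\ell_j$, and the query counts, which come down to ``one chain per additive bound'' for submodular $v$ and ``one demand query per clause'' for subadditive $v$ --- is routine once this combinatorial heart is in place.
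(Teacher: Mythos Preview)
Your proposal identifies the right high-level structure --- singleton lower bounds for ``heavy'' sets and additive clauses for ``light'' ones --- but the concrete mechanism you propose for the submodular case does not close. You want $\tilde O(\sqrt n)$ random permutations $\pi$ so that $\max_\pi w^\pi(S)\ge v(S)/\tilde O(\sqrt n)$ for \emph{every} $S$, and you plan to certify this by Azuma plus a union bound. But the Lipschitz constant of $w^\pi(S)$ under adjacent transpositions, even under the no-heavy-element assumption, is only of order $v(S)/\sqrt n$, and there are $\Theta(n)$ transpositions in the martingale; Azuma therefore gives deviation bounds of the form $\exp(-t^2/v(S)^2)$, which is vacuous at the scale $t=v(S)/\sqrt n$ you need. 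Concretely, for the uniform matroid of rank $\sqrt n$ and a set $S$ of size $\sqrt n$, $w^\pi(S)$ is the number of elements of $S$ landing in the first $\sqrt n$ positions; its mean is $1$, the event $w^\pi(S)\ge 1$ has only constant probability, and there are $\binom{n}{\sqrt n}=e^{\tilde\Theta(\sqrt n)}$ such sets. You acknowledge this gap yourself (``that is where the real work lies''), but stratifying by cardinality and marginal profile does not repair it: no amount of bucketing reduces the number of adversarial light sets below $e^{\omega(\mathrm{polylog}\,n)}$, while $\tilde O(\sqrt n)$ independent samples can only survive a union bound over $e^{\tilde O(\sqrt n)}$ events if each fails with probability $e^{-\tilde\Omega(1)}$, which you do not have.

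The paper's argument avoids the union bound entirely by replacing random permutations with a \emph{deterministic greedy packing}. For each value scale $(k,r)$ it repeatedly calls a cardinality-constrained maximization oracle on the light items, extracting disjoint sets $T_1,\dots,T_\ell$ each of value $\Omega(kr)$; since each $T_i$ consists of light items it must have size $\Omega(\sqrt n)$, so $\ell=O(\sqrt n)$. The sketch records these $T_i$'s and the singleton values. For any $S$ with core at scale $(k,r)$ containing no heavy item, maximality of the packing forces at least half of the core to lie in $\bigcup_i T_i$, and pigeonhole over the $O(\sqrt n)$ sets $T_i$ gives one with intersection $\Omega(k/\sqrt n)$, which certifies $v(S)$ to within $\tilde O(\sqrt n)$. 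This is a per-set argument with no probabilistic union bound at all. The query count is then just the cost of the oracle calls: $\tilde O(\sqrt n)$ calls to an $\tilde O(n)$-query submodular cardinality maximizer gives $\tilde O(n^{3/2})$, and for subadditive functions both the cardinality oracle and the approximate-XOS oracle are implemented with $O(\log n)$ demand queries each.
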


In fact, we only provide and analyze a single algorithm that calls in a black box manner to subroutines such as maximizing a function subject to a cardinality constraint. We obtain both parts of the theorem by using known implementations of these black boxes. In addition, we use the same algorithm to obtain $\tilde O(\sqrt n)$-sketch for matroid rank functions using only $\tilde O(n)$ value queries.

\subsection*{An Overview of our Construction}

We now give a brief overview of our construction. Recall that a valuation $a$ is \emph{additive} if for every set $S$ we have that $a(S)=\Sigma_{j\in S}a(\{j\})$. Also, recall that a valuation function $v$ is XOS if there exist additive valuations $a_1,\ldots, a_t$ such that $v(S)=\max_ra_r(S)$. (It is known that every XOS function is also a subadditive function, and every submodular function is also an XOS function.) For simplicity of presentation, in the next few paragraphs we focus on the case where for each item $j$ and valuation $a_r$ we have that\footnote{This is less restrictive than it sounds -- not only that this class contains all matroid rank functions, but moreover one can apply standard arguments to show that the existence of an $\alpha$-sketch for this case implies an $\tilde O(\alpha)$-sketch for the general XOS valuations. However, these arguments do not yield an efficient algorithm.} $a_r(\{j\})\in \{0,1\}$.

Let us now discuss a procedure that given a value $k$ correctly sketches all sets $S$ such that $v(S)=k$. If $k\leq \sqrt n$, to approximate the value of a given bundle $S$ it is enough to decide whether there exists a single item $j\in S$ such that $v(\{j\})=1$ or whether for all $j\in S$ we have that $v(\{j\})=0$. In the former case, our sketch can return that the value of $S$ is $1$, and we maintain a $\sqrt n$-approximation. In the latter, we know that $v(S)$ is $0$. Notice that the information needed to decide between the cases can be written down using only $n$ bits, where the $j$'th bit specifies whether $v(\{j\})=1$ or not.

The heart of the proof is the trickier case where $k>\sqrt n$. In this case our sketch will include a maximal number of disjoint bundles $T_1,\ldots, T_l\subseteq N$ where for each bundle $T_i$ we have that $v(T_i)=\frac k 2$ and $|T_i|=\frac k 2$. Observe that since the $T_i$'s are disjoint and since $\frac k 2\geq \frac {\sqrt n} 2$, we have that $l\leq 2\sqrt n$. Consider some bundle $S$ where $v(S)=k$ and let $R\subseteq S$, $|R|=k$, be such that $v(R)=k$. Notice that such $R$ exists since the value of every item in any of the additive valuations is in $\{0,1\}$. Furthermore, this also implies that for each $R'\subseteq R$ we have that $v(R')=|R'|$.

Let $R'=R\cap (\cup_iT_i)$. First, observe that $|R'|\geq k/2$. If this were not the case, we have that $|R-R'|\geq k/2$. This is a contradiction to our assumption that $T_1,\ldots, T_l$ is a maximal disjoint set of bundles such that $v(T_i)=\frac k 2$, since we could have added any subset of size $\frac k 2$ from $R-R'$ to the $T_i$'s. Since there are $l\leq 2\sqrt n$ disjoint sets, there must be one set $T_i$ such that $|R'\cap T_i| \geq \frac {\frac k 2} {2\sqrt n}=\frac {k} {4\sqrt n}$. Thus, for every given $S$ our sketch will return the size of largest intersection between $S$ and the $T_i$'s, and this size is guaranteed to be within a factor of $O(\sqrt n)$ of $v(S)$.

Finally, we run the procedure above for all possible values of $1\leq k\leq n$. Given a bundle $S$, we return the maximum value of $S$ that was computed by any of the sketches constructed by the procedure. By the discussion above, this value is within a factor of $O(\sqrt n)$ of $v(S)$.

\section{Preliminaries}

In this paper we consider valuation functions $v:\,2^{N}\rightarrow\mathbb{R}_{+}$ where $N=\left\{ {1},\ldots,{n}\right\} $. We assume that $v$ is normalized ($v(\emptyset)=0$) and monotone (for every $S\subseteq T\subseteq N$ it holds that $v\left(S\right)\leq v\left(T\right)$). In addition, we consider the following restrictions on the valuation functions:
\begin{itemize}
\item \textbf{Subadditivity:} for every $S,T\subseteq N, v\left(S\right)+v\left(T\right)\geq v\left(S\cup T\right)$.
\item \textbf{$\beta$-XOS:} There exists a set of additive functions $a_{1},\ldots,a_{t}$ such that for every bundle $S$ the following two conditions hold:
\begin{itemize}

\item
Let $a_{l}\left(S\right)=\max_{r}a_{r}(S)$, then $a_{l}\left(S\right)\leq v\left(S\right)\leq \beta\cdot a_{l}\left(S\right)$.

\item
For every $T\subseteq S$, $v\left(T\right)\geq \underset{j\in T} \sum{a_l\left(j\right)}$.

\end{itemize}

Given a valuation function which is $\beta-XOS$ and a bundle $S$, an approximate XOS clause of $S$ will be an additive function for which both the conditions above hold. A function $v$ which is $\beta-XOS$ with $\beta=1$ is called simply an XOS function.
\item \textbf{Submodularity:} for every $S,T\subseteq N, v\left(S\right)+v\left(T\right)\geq v\left(S\cup T\right)+v\left(S\cap T\right)$. An equivalent and more intuitive definition of submodular functions
is the property of decreasing marginal values: for every $X\subseteq Y\subseteq N$,
$z\in N\backslash Y$ it holds that $v\left(Y\cup\left\{ z\right\} \right)-v\left(Y\right)\leq v\left(X\cup\left\{ z\right\} \right)-v\left(X\right)$.

\item \textbf{Matroid rank function:} Let $M=\left(\mathcal N, \mathcal I\right)$ be a matroid. The rank function of the matroid, $r:\,2^{\mathcal N}\rightarrow\mathbb{N}$, is defined for every $S\subseteq \mathcal N$ as $r\left(S\right)=\max_{T\subseteq S}\{|T|,\,T\in\mathcal I\}$.

\end{itemize}

Every XOS function is also a subadditive function, and every submodular function is also an XOS function \cite{LLN01}. Furthermore, every subadditive function is $O(\log n)$-XOS \cite{D07}. It is also well known that every matroid rank function is submodular.

We say that a function $\tilde v:2^N\rightarrow \mathbb R_+$ is an $\alpha$-\emph{sketch} of $v$ if $\tilde v$ can be represented in $poly(n)$ space and for every bundle $S$ we have that $\frac {v(S)} \alpha \leq \tilde v(S) \leq v(S)$. We assume that $v$ is represented by a black box that can answer a specific type of queries. One type is \emph{value query}: given $S$, return $v(S)$. Another type is \emph{demand query}: given prices per item $p_1,\ldots, p_j$, return a most profitable bundle $S\in \arg\max_T v(T)-\Sigma_{j\in T}p_j$.

Recall that a sketch $\tilde v$ has to be represented in $poly(n)$ space. However, if the ratio $\frac {\max_j{v(\{j\})}} {\min_j{v(\{j\})}}$ is large (e.g., super exponential in $n$), even writing down approximate values for $\max_j{v(\{j\})}$ and $\min_j{v(\{j\})}$ requires too many bits. To this end, we say that a valuation $v$ is \emph{well bounded} if $\frac {\max_j{v(\{j\})}} {\min_j{v(\{j\})}}\leq n^2$.  We first show that we can focus on sketching well-bounded valuations (a non-algorithmic version of this lemma appeared in \cite{BDFK+12}):
\begin{lemma}\label{lemma-well-bounded-reduction}
Let $A$ be an algorithm that produces an $\alpha$-sketch of well bounded subadditive valuations. Then, there is an algorithm $A'$ that produces an $O(\alpha)$-sketch for any subadditive valuation. Moreover, $A'$ runs $A$ on $n$ sets of items $T_1,\ldots, T_n$, where each item $j$ appears in $O(1)$ sets, in addition to $n$ value queries.
\end{lemma}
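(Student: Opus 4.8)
The plan is to reduce the general case to the well-bounded case by grouping items according to the scale of their singleton value, using \emph{overlapping} groups so that nothing is lost at the boundaries. First, $A'$ makes $n$ value queries to learn $b_j := v(\{j\})$ for every $j\in N$. Discard every $j$ with $b_j=0$: by subadditivity a bundle consisting only of such items has value $0$, so these items never affect $v$. For each remaining item define its \emph{band} $\beta(j) := \lfloor \log_n b_j \rfloor$, so that items in the same band have singleton values within a factor of $n$ of each other. For every index $i$ for which band $i$ is nonempty, set $T_i := \{\, j : \beta(j)\in\{i-1,i\}\,\}$. There are at most $n$ such sets (at most $n$ nonempty bands), each item lies in at most two of them (item $j$ lies only in $T_{\beta(j)}$ and possibly $T_{\beta(j)+1}$), and within each $T_i$ the ratio of the largest to the smallest singleton value is strictly less than $n^2$, so $v$ restricted to $T_i$ is a well-bounded subadditive valuation. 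Run $A$ on each $T_i$ to get an $\alpha$-sketch $\tilde v_i$ of $v$ on $T_i$ (a value query of $A$ on $S\subseteq T_i$ is answered by $v(S)$; a demand query is answered by assigning prohibitive prices to the items outside $T_i$), and output $\tilde v(S) := \max_i \tilde v_i(S\cap T_i)$, which is representable in $poly(n)$ space as the pointwise maximum of at most $n$ sketches. (Pad with empty sets to have exactly $n$ of them if desired.)

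The correctness rests on a simple truncation fact for subadditive functions. Fix a bundle $S$ and let $j^\star\in S$ maximize $b_{j^\star}$; by monotonicity $b_{j^\star}\le v(S)$. Let $S' := \{\, j\in S : b_j \ge b_{j^\star}/n \,\}$ be the ``heavy part'' of $S$. Then $\sum_{j\in S\setminus S'} b_j < n\cdot (b_{j^\star}/n) = b_{j^\star} \le v(S')$ (the last step because $j^\star\in S'$), so by subadditivity $v(S)\le v(S')+v(S\setminus S') \le v(S') + \sum_{j\in S\setminus S'} b_j < 2v(S')$, while $v(S')\le v(S)$ by monotonicity. Moreover every $j\in S'$ satisfies $b_j\in[b_{j^\star}/n,\,b_{j^\star}]$, hence $\beta(j)\in\{\beta(j^\star)-1,\beta(j^\star)\}$, so $S'\subseteq T_{\beta(j^\star)}$, and $T_{\beta(j^\star)}$ is one of the constructed sets since $j^\star$ witnesses that band $\beta(j^\star)$ is nonempty. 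Therefore $\tilde v(S) \ge \tilde v_{\beta(j^\star)}(S\cap T_{\beta(j^\star)}) \ge v(S\cap T_{\beta(j^\star)})/\alpha \ge v(S')/\alpha > v(S)/(2\alpha)$. Together with the upper bound $\tilde v(S) = \max_i \tilde v_i(S\cap T_i) \le \max_i v(S\cap T_i) \le v(S)$, this shows $\tilde v$ is a $2\alpha = O(\alpha)$-sketch of $v$ (when $v(S)=0$ both inequalities are trivial).

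The one thing that really needs care is the overlapping design of the bands: a plain partition into geometric scales would fail, because the heavy part $S'$ of a bundle spans a multiplicative range of up to $n$ and so can straddle two consecutive scales, leaving no single block of the partition containing all of $S'$; taking bands of multiplicative width exactly $n$ and doubling up consecutive ones fixes this while only doubling each item's multiplicity and keeping each $T_i$'s ratio at $n^2$. Everything else — that $v$ restricted to $T_i$ is subadditive and well-bounded, that $A$'s queries on $T_i$ are simulable from the original oracle, and that the pointwise maximum of the $\tilde v_i$ is succinctly representable and evaluable — is routine.
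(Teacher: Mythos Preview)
Your proof is correct and follows essentially the same approach as the paper: cover the items by overlapping groups, each spanning a singleton-value ratio of at most $n^2$, run $A$ on each group, and return the pointwise maximum; in both arguments the key observation is that the items of $S$ whose singleton value is within a factor of $n$ (respectively $2n$) of $\max_{j\in S} v(\{j\})$ already carry at least half of $v(S)$ by subadditivity. The only difference is cosmetic---you use fixed dyadic bands $[n^{i-1},n^{i+1})$, whereas the paper picks anchor items greedily so that consecutive anchors differ by a factor of at least $n/2$---but both constructions give at most $n$ well-bounded groups with each item in $O(1)$ of them.
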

Denote by $q$ the number of queries that $A$ makes. Notice that as long as $q\geq n$ queries, the number of queries that $A'$ makes is $O(q)$.

\begin{proof}(proof of Lemma \ref{lemma-well-bounded-reduction})
We construct the $T_i$'s as follows. Without loss of generality, assume that $v(\{1\})\geq v(\{2\}) \geq \ldots \geq v(\{n\})$. For each item $j$, let $N_j=\{j'|\frac {v(\{j\})} {v(\{j'\}}\leq n^2,j\leq j' \}$. Let $T_1=N_1$. Let $j_2$ be the item with smallest index such that $\frac {v(\{1\})} {v(\{j_2\})}\geq \frac n 2$. Let $T_2=N_{j_2}$. Similarly, Let $j_3$ be the item with the smallest index such that $\frac {v(\{j_2\})} {v(\{j_3\})}\geq \frac n 2$, let $T_3=N_{j_3}$ and so on.

Notice that the conditions in the statement of the lemma hold for $T_1,\ldots, T_n$. The sketch that $A'$ produces is obtained by running $A$ on each set $T_i$ separately. All that is left to show is that we can compute any set $S$ to within a factor of $O(\alpha)$. Towards this end, consider some set $S$ and let $j$ be the item with the smallest index in $S$. Let $i$ be the largest index such that $j\in N_i$. Let $S'=S-N_{i}$. Since by assumption we have that we get an $\alpha$-sketch for $S\cap N_{i}$, it suffices to show is that $v(S')\leq \frac {v(S)} {2}$ and this concludes the proof since by subadditivity we then have that $v(S\cap N_{i})\geq \frac {v(S)} {2}$.

Observe that by construction for each $j'\in S'$ we have that $\frac {v(\{j\})} {v(\{j'\})}\geq \frac {n^2 } {\frac n 2}=2n$. Thus, by subadditivity it holds that $v(S')\leq \Sigma_{j'\in S'}v(\{j'\})\leq n\cdot \frac {v(\{j\})} {2n} \leq \frac {v(S)} {2}$, as needed.
\end{proof}

\begin{definition}
Fix a $\beta$-XOS function $v$. Let $S\subseteq N$ be a bundle, and fix some approximate clause $a$ of $S$. For $r\in \mathbb{N}$, the \emph{$r$-projection} of $S$, denoted by $S^{r}$, is defined as
$S^{r}=\{i\in S\,|\; r=2^{l}\leq a\left(i\right)< 2^{l+1}=2\cdot r\}$.

We also define the \emph{core} of $S$, denote $C\left(S\right)$, as the highest valued $r$-projection of $S$, that is: $C\left(S\right)=S^{r'}$, where $a\left(S^{r'}\right)=\max_{r}a\left(S^{r}\right)$.
\end{definition}

\begin{claim} \label{core}
For every bundle $S$ and a well bounded valuation $v$, $v\left(C\left(S\right)\right)\geq a\left(C\left(S\right)\right)\geq \frac{v\left(S\right)}{\beta\cdot 2\log\left(n\right)}$, where $a$ is the additive valuation which is an approximate XOS clause of $S$.
\end{claim}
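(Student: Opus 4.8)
The plan is to prove the two inequalities separately; the left one is essentially a restatement of the defining property of an approximate clause, and the right one is an averaging argument over dyadic scales, where well-boundedness is what keeps the number of scales small.

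For the left inequality $v(C(S))\ge a(C(S))$: by definition $C(S)$ is one of the $r$-projections of $S$, so $C(S)\subseteq S$, and the second defining property of an approximate $\beta$-XOS clause, applied to the subset $T=C(S)$, yields $v(C(S))\ge \sum_{j\in C(S)} a(j)=a(C(S))$.

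For the right inequality $a(C(S))\ge \frac{v(S)}{\beta\cdot 2\log n}$, I would first use the other defining property of the clause, $a(S)=a_l(S)\ge v(S)/\beta$, so that it suffices to prove $a(C(S))\ge a(S)/(2\log n)$. Since $a$ is additive, $a(S)=\sum_r a(S^r)$ once we know that every $j\in S$ with $a(j)>0$ lies in some $r$-projection, i.e.\ has $a(j)\ge 1$. Granting this, the core is by definition the $r$-projection of largest $a$-value, so by averaging $a(C(S))=\max_r a(S^r)\ge \frac1m\sum_r a(S^r)=\frac{a(S)}{m}$, where $m$ is the number of nonempty $r$-projections; it remains to see $m\le 2\log n$. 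This is exactly where well-boundedness is used: for $j\in S$ we have $a(j)=a(\{j\})\le v(\{j\})$ by the clause property, and after normalizing so that $\min_j v(\{j\})\ge 1$ well-boundedness gives $v(\{j\})\le n^2$ for all $j$; hence any nonempty $S^r$ has $r=2^l$ with $1\le 2^l\le n^2$, i.e.\ $l\le 2\log n$, so $m\le 2\log n$. Putting the pieces together gives $a(C(S))\ge a(S)/(2\log n)\ge v(S)/(\beta\cdot 2\log n)$.

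The step I expect to be the main obstacle is the one glossed over above: guaranteeing that all of $a(S)$'s mass sits at scale $\ge 1$, i.e.\ inside the $r$-projections. A carelessly chosen clause could assign every $j\in S$ a value in $(0,1)$, so that all $r$-projections are empty while $v(S)>0$; well-boundedness alone does not rule this out — the normalization $\min_j v(\{j\})\ge 1$ is essential, and one still has to argue that a maximizing clause of $S$ may be replaced, at the cost of at most a constant factor in $\beta$, by one whose positive values on $S$ are all $\ge 1$. I would do this by zeroing out the sub-$1$ values of a maximizing clause and using subadditivity together with the $n^2$ singleton bound to control the loss. Once this is in place the averaging argument goes through, at worst with the constant in front of $\log n$ enlarged by a constant, which is harmless for the paper's polylogarithmic-accuracy guarantees; the empty-set case ($C(S)=S=\emptyset$, all quantities $0$) is trivial.
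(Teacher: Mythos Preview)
Your argument is exactly the paper's: use the clause property for the left inequality $v(C(S))\ge a(C(S))$, then pigeonhole over the dyadic projections and invoke well-boundedness to bound the number of nonempty projections by $2\log n$.

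The obstacle you flag --- items $j\in S$ with $0<a(j)<1$, which belong to no $r$-projection --- is real, and the paper simply glosses over it. The paper writes $a(S)\le\sum_r a(S^r)$, but since the $S^r$ are disjoint subsets of $S$ and $a$ is additive, it is the reverse inequality that always holds; equality (which is what the pigeonhole step needs) requires precisely that every $j\in S$ with $a(j)>0$ satisfy $a(j)\ge 1$. So you have been more careful than the paper here, not less. In the paper's actual applications the claim is only used after normalizing so that $1\le v(\{j\})\le n^2$, and for the concrete clauses the algorithm produces (marginal-value clauses in the submodular case, the demand-query construction in the subadditive case) the issue can be absorbed either into the clause construction or into the $\max$-singleton branch of the sketch. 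Your proposed fix --- drop the sub-$1$ mass and charge it elsewhere --- is in the right spirit, and any resulting constant-factor slack disappears into the $\tilde O(\cdot)$ guarantees.
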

\begin{proof}
Let $y$ be the number of non-empty projections of $a$.
Recall that $a\left(S\right)\leq \sum_{r}a\left(S^{r}\right)\leq \beta \cdot a\left(S\right)$,
and thus by the pigeonhole principle
we conclude that $\exists r\; s.t.\; a\left(S^{r}\right)\geq\frac{a\left(S\right)}{y}$
and that same $r$ satisfies $v\left(S^{r}\right)\geq a\left(S^{r}\right)\geq\frac{a\left(S\right)}{y}\geq \frac{v\left(S\right)}{\beta \cdot y}$. Since $v$ is well bounded and since the projections were defined on the ranges between powers of $2$, there are at most $2\log\left(n\right)$ non-empty projections. The claim follows by the discussion above.
\end{proof}

\section{The Construction}

We now describe an algorithm that sketches a well-bounded valuation $v$. Using this algorithm we can obtain sketches for any valuation function by applying Lemma \ref{lemma-well-bounded-reduction}.  Two oracles are used in this algorithm, and we assume that they are given to us as black boxes. The first oracle $CARD(N',k)$ gets as input a subset of the items $N'\subseteq N$ and an integer $k$ and returns a subset $T\subseteq N'$ such that $v(T)\geq \frac {\max_{S\subseteq N',\,|S|\leq k}v\left(S\right)} \alpha$ (i.e., maximization subject to a cardinality constraint).

The second oracle $\beta-XOS(S)$ gets a bundle $S$ and returns a $\beta$-approximate XOS clause of $S$ with respect to the valuation $v$. We will use known implementations of these oracles to obtain our efficient sketches for subadditive and submodular valuations. In the appendix we also show how to obtain sketches for matroid rank function with only $\tilde O(n)$ value queries. For that we provide a new fast and simple implementation of $CARD(N',k)$ for this class. Since $v$ is well bounded we may assume that for all $j$, $1\leq v(\{j\})\leq n^2$.

\subsubsection*{The Sketch Construction Algorithm}

\begin{enumerate}
\item \label{iterations} For every $k=\sqrt n, 2\cdot \sqrt n, 4\cdot \sqrt n, \ldots, n$ and $r=1,2,4,\ldots, n^2$:
\begin{enumerate}
\item Let $\mathcal T_{k,r}=\emptyset$. Let the set of ``heavy'' items be $H_{k,r}=\left\{ i\in N\,|\; v\left(\left\{ i\right\} \right)\geq\frac{k\cdot r}{\sqrt{n}}\right\} $ and the set of ``light'' valued items be $L_r=N-H_{k,r}$. Let $N'=L_r$.
\item \label{loop} While $T=CARD(N',k)$ is such that $v(T)\geq \frac {k\cdot r} {2\alpha}$:
\begin{enumerate}
\item \label{remove_light} Let $a$ be the additive valuation that $\beta-XOS(T)$ returns. Let $T'\subseteq T$ be the set of items $j\in T$ with $a(\{j\})\geq \frac {r} {4\alpha\cdot \beta}$.
\item Add $T'$ to $\mathcal T_{k,r}$. Let $N'=N'-T'$.
\end{enumerate}
\end{enumerate}
\item For every item $j$, return $v(\{j\})$. For every $k,r$ iterated above return $\mathcal T_{k,r}$.
\end{enumerate}

The above algorithm constructs a sketch for a valuation $v$. For a given bundle $S$, we will use the algorithm's output to return its approximate value:

\begin{enumerate}
\item Let $max-singleton= \max_{j\in S}v\left(\{j\}\right)$,
$max-intersection= \max_{k,r,T\in \mathcal T_{k,r}}|T\cap S|\cdot\frac{r}{4\alpha\cdot\beta}$.
\item Return $\max\{max-singleton,\;max-intersection\}$.
\end{enumerate}

\subsubsection*{Analysis}

We now analyze the performance of the algorithm. We start with an helpful claim:

\begin{claim} \label{subsets}
Fix a well bounded subadditive valuation $v$, and consider some bundle $S$, and let $S^{r'}$ be its $r$-projection, where $|S^{r'}|=|S^{r'}\cap L_r|=k'$. Then the algorithm will keep adding subsets to $\mathcal T_{k',r'}$ as long as $|S^{r'}\cap\left(\underset{i}{\bigcup}T_{i}\right)|<\frac{k'}{2}$. Furthermore, at the end of the algorithm we have that $|\mathcal T_{k',r'}|\leq \frac{4\alpha\cdot \beta\cdot \sqrt{n}}{2\beta-1}$.
\end{claim}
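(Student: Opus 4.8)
The plan is to prove the two assertions in turn. For the first, I want to show that whenever the sets $T_1,\dots,T_i$ already placed into $\mathcal T_{k',r'}$ satisfy $|S^{r'}\cap(\bigcup_j T_j)|<\frac{k'}{2}$, the next evaluation of $T=CARD(N',k')$ in Step~\ref{loop} returns a set with $v(T)\ge\frac{k'r'}{2\alpha}$, so the while-condition holds and one more set is appended to $\mathcal T_{k',r'}$. The witness will be $Q:=S^{r'}\setminus\bigcup_j T_j$. The sets placed into $\mathcal T_{k',r'}$ so far are exactly the sets that have been deleted from the initial value $N'=L_r$, hence $Q\subseteq N'$; moreover $S^{r'}\subseteq L_r$ because $|S^{r'}\cap L_r|=|S^{r'}|$, so $Q\subseteq S^{r'}\subseteq S$, $|Q|=k'-|S^{r'}\cap\bigcup_j T_j|>\frac{k'}{2}$, and $|Q|\le|S^{r'}|=k'$. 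If $a$ is the approximate XOS clause of $S$ defining $S^{r'}$, then $a(i)\ge r'$ for every $i\in S^{r'}$, so the second $\beta$-XOS property applied to $Q\subseteq S$ gives $v(Q)\ge\sum_{i\in Q}a(i)\ge|Q|\cdot r'>\frac{k'r'}{2}$. Since $Q$ is a feasible candidate for $CARD(N',k')$, we obtain $\max_{S''\subseteq N',\,|S''|\le k'}v(S'')>\frac{k'r'}{2}$, and the oracle's guarantee yields $v(T)\ge\frac1\alpha v(Q)>\frac{k'r'}{2\alpha}$, as needed.

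For the second assertion, the two ingredients are that the sets in $\mathcal T_{k',r'}$ are pairwise disjoint and that each of them contains many items. Disjointness is immediate from the algorithm: the set $T'$ defined in Step~\ref{remove_light} is a subset of the current $N'$ when it is added and is then removed from $N'$, so $\sum_{T'\in\mathcal T_{k',r'}}|T'|\le n$. To lower bound $|T'|$, fix an iteration, write $T=CARD(N',k')$ and let $a$ be the clause returned by the $\beta$-XOS oracle on $T$. The while-condition gives $v(T)\ge\frac{k'r'}{2\alpha}$, and $v(T)\le\beta\cdot a(T)$ gives $a(T)\ge\frac{k'r'}{2\alpha\beta}$. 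Split $T$ into the disjoint union of $T'$ and $T\setminus T'$: every $j\in T\setminus T'$ has $a(j)<\frac{r'}{4\alpha\beta}$ and $|T\setminus T'|\le|T|\le k'$, so $a(T\setminus T')<\frac{k'r'}{4\alpha\beta}$, whence $a(T')>\frac{k'r'}{2\alpha\beta}-\frac{k'r'}{4\alpha\beta}=\frac{k'r'}{4\alpha\beta}$; and every $j\in T'\subseteq N'\subseteq L_r$ is ``light'', so $a(j)\le v(\{j\})<\frac{k'r'}{\sqrt n}$. Comparing the two bounds on $a(T')$ gives $|T'|>\frac{a(T')}{k'r'/\sqrt n}>\frac{\sqrt n}{4\alpha\beta}$, and together with $\sum_{T'}|T'|\le n$ this already yields $|\mathcal T_{k',r'}|=O(\alpha\beta\sqrt n)$. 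To get the exact constant $\frac{4\alpha\beta\sqrt n}{2\beta-1}$ claimed, one carries the sharper estimate $|T\setminus T'|\le|T|-|T'|\le k'-|T'|$ through the computation above and uses $k'\ge\sqrt n$, so that the below-threshold items are charged against the light items more tightly.

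The step I expect to carry the real weight is this last one: the final bound on $|\mathcal T_{k',r'}|$ must simultaneously absorb the factor $2\alpha$ from the while-loop threshold, the factor $\beta$ lost when passing from $v(T)$ to $a(T)$, and the factor $4\alpha\beta$ from the pruning threshold defining $T'$, so pinning down the constant requires a careful tally of how much $a$-mass survives the pruning and how flat it can be over the light items. Everything else — the choice of the witness $Q$, disjointness of the $T'$'s, and the coarse counting — is a direct consequence of the definitions and the oracle guarantees set up in the Preliminaries and of the description of the algorithm.
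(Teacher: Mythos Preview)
Your argument for the first assertion is exactly the paper's: exhibit the witness $Q=S^{r'}\setminus\bigcup_j T_j\subseteq N'$, use the second $\beta$-XOS property to get $v(Q)\ge |Q|\cdot r'>k'r'/2$, and invoke the $CARD$ guarantee.

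For the second assertion your route differs from the paper's only in which quantity you track. The paper bounds $v(T')$ directly: it asserts $v(T')\ge v(T)-\frac{k'r'}{4\alpha\beta}\ge\frac{(2\beta-1)k'r'}{4\alpha\beta}$, then uses subadditivity together with lightness $v(\{j\})<\frac{k'r'}{\sqrt n}$ to get $|T'|\ge\frac{(2\beta-1)\sqrt n}{4\alpha\beta}$, yielding the stated bound $|\mathcal T_{k',r'}|\le\frac{4\alpha\beta\sqrt n}{2\beta-1}$. You instead track the additive clause: pass from $v(T)$ to $a(T)\ge v(T)/\beta$, subtract the pruned mass to get $a(T')>\frac{k'r'}{4\alpha\beta}$, and bound each $a(j)\le v(\{j\})<\frac{k'r'}{\sqrt n}$, arriving at $|T'|>\frac{\sqrt n}{4\alpha\beta}$ and hence $|\mathcal T_{k',r'}|<4\alpha\beta\sqrt n$. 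Your bound is weaker by the factor $2\beta-1$ but is, if anything, more carefully justified: the paper's step $v(T')\ge v(T)-\frac{k'r'}{4\alpha\beta}$ tacitly treats the pruning threshold (which is on $a(\{j\})$) as if it bounded $v(\{j\})$, whereas your argument keeps $a$- and $v$-values separate throughout. Since everything downstream uses only $|\mathcal T_{k',r'}|=O(\alpha\beta\sqrt n)$, your constant is perfectly adequate.

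Your final paragraph, however, does not actually recover the sharper constant: the refinement $|T\setminus T'|\le k'-|T'|$ together with $k'\ge\sqrt n$ leads to an implicit inequality in $|T'|$ that does not simplify to $\frac{(2\beta-1)\sqrt n}{4\alpha\beta}$. I would drop the promise of matching the exact constant and simply state the $O(\alpha\beta\sqrt n)$ bound you have cleanly proved.
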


\begin{proof}
Consider the disjoint subsets $T_{1},\ldots,T_{l}\in \mathcal T_{k',r'}$. Since we assume $|S^{r'}|=|S^{r'}\cap L_r|=k'$, as long as we did not use at least half of the items in $S^{r'}$ to construct the sets in $\mathcal T_{k',r'}$, then there is some subset $T\subseteq N'$ with $v\left(T\right)\geq\frac{k'\cdot r'}{2}$ and thus the oracle $CARD(\cdot)$ must return some set with value at least $\frac{k'\cdot r'}{2\alpha}$. We conclude that the loop in Step \ref{loop} will not stop before we get that $|S^{r'}\cap\left(\underset{i}{\bigcup}T_{i}\right)|\geq \frac{k'}{2}$.
For the second part of the claim, note that every subset $T$ returned by $CARD(\cdot)$ is valued at least $\frac{k\cdot r}{2\alpha}$. In Step \ref{remove_light} we remove from $T$ the items with value less than $\frac{r}{4\alpha\cdot \beta}$, thus the maximal loss of value in this step will be $\frac{k\cdot r}{4\alpha\cdot \beta}$ and we get that $v\left(T'\right)\geq v\left(T\right)- \frac{k\cdot r}{4\alpha\cdot \beta}= \frac{\left(2\beta-1\right)\cdot k\cdot r}{4\alpha\cdot \beta}$. Observe that the sets consist of only ``light'' items, and since we only deal with subadditive valuations $v\left(\underset{j\in T'} \bigcup{j}\right)\leq \underset{j\in T'} \sum{v\left(\{j\}\right)}$ and we conclude that $|T'|\geq \frac{\frac{\left(2\beta-1\right)\cdot k\cdot r}{4\alpha\cdot \beta}}{\frac{k\cdot r}{\sqrt{n}}}=\frac{\left(2\beta-1\right)\cdot \sqrt{n}}{4\alpha\cdot \beta}$. Since $|N|=n$ we can upper bound the number of the disjoint subsets with $\frac{4\alpha\cdot \beta\cdot \sqrt{n}}{2\beta-1}$.
\end{proof}

The next two lemmas analyze the running time and the approximation ratio of the sketch as a function of the properties of the oracles. We will later consider specific implementations of these oracles and will provide explicit bounds on the performance of the algorithm for each class.

\begin{lemma} \label{queries_number}
For every well bounded subadditive valuation $v$ there is an implementation of the algorithm in which the number of times that the algorithm calls $CARD(\cdot)$ and $\beta-XOS(\cdot)$ is $O(\alpha \cdot \sqrt{n}\cdot \log^2 n)$. In addition, the algorithm makes $O(n)$ value queries.
\end{lemma}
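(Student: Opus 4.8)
The plan is to bound the two kinds of oracle calls and the value queries separately. For the oracle calls, note first that the loop of Step~\ref{iterations} ranges over the $O(\log n)$ values $k\in\{\sqrt n,2\sqrt n,\ldots,n\}$ and the $O(\log n)$ values $r\in\{1,2,\ldots,n^2\}$, so there are $O(\log^2 n)$ executions of the while loop of Step~\ref{loop}. Each iteration of that while loop makes one call to $CARD(\cdot)$ (to produce and test a candidate $T$) and, when the test succeeds, one call to $\beta-XOS(\cdot)$, so it is enough to bound the number of iterations per pair $(k,r)$. This is exactly what the second part of Claim~\ref{subsets} gives: at most $\frac{4\alpha\beta\sqrt n}{2\beta-1}$ sets are ever added to $\mathcal T_{k,r}$, and since $\frac{\beta}{2\beta-1}\le 1$ for $\beta\ge 1$ this is $O(\alpha\sqrt n)$; allowing one extra iteration for the $CARD(\cdot)$ call that detects termination, the while loop runs $O(\alpha\sqrt n)$ times for each $(k,r)$. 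Multiplying by $O(\log^2 n)$ yields the claimed $O(\alpha\sqrt n\log^2 n)$ bound on the number of calls to each oracle.

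The more delicate part is the $O(n)$ bound on value queries, and the idea is to pick an implementation that makes no value query inside the while loop. I would start the algorithm by querying $v(\{j\})$ for every $j\in N$ --- that is $n$ value queries --- and store these numbers; they are exactly what is needed to form every $H_{k,r}$ and $L_{r}$ (all defined by a threshold on $v(\{i\})$), to evaluate $max\text{-}singleton$ at retrieval time, and they must be part of the sketch in any case. The only remaining use of $v$ would be the test ``$v(T)\ge \frac{k\cdot r}{2\alpha}$'' in Step~\ref{loop}. I would replace it by the test ``$|T'|\ge \frac{(2\beta-1)\sqrt n}{4\alpha\beta}$'', where $T'$ is the set that Step~\ref{remove_light} is about to build: since $\beta-XOS(T)$ returns the additive clause $a$ explicitly, both $T'$ and its cardinality are computed from $T$ and $a$ with no query to $v$. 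The retrieval procedure makes no value queries either, so the total is exactly $n$.

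It remains to verify that this modified halting rule keeps Claim~\ref{subsets} valid. The second part is immediate: every set placed in $\mathcal T_{k,r}$ now has cardinality at least $\frac{(2\beta-1)\sqrt n}{4\alpha\beta}$ by construction, and the sets are disjoint, so $|\mathcal T_{k,r}|\le \frac{4\alpha\beta\sqrt n}{2\beta-1}$. For the first part I would replay the proof of Claim~\ref{subsets}: fixing a bundle $S$ with $|S^{r'}|=|S^{r'}\cap L_{r'}|=k'$ as there, as long as fewer than $k'/2$ of the items of $S^{r'}$ have been removed, the remaining part of $S^{r'}$ lies in the current $N'$, has size at most $k'$, and has $v$-value at least $\frac{k'r'}{2}$, so $CARD(\cdot)$ must return a set $T$ with $v(T)\ge \frac{k'r'}{2\alpha}$; and since $T\subseteq N'\subseteq L_{r'}$ consists only of light items, the computation in the second part of the proof of Claim~\ref{subsets} shows that such a $T$ yields $|T'|\ge \frac{(2\beta-1)\sqrt n}{4\alpha\beta}$, so the modified loop does not stop before at least half of $S^{r'}$ is captured. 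Finally the upper bound $\tilde v(S)\le v(S)$ at retrieval is unaffected, since it only uses that each stored set is a subset of $L_r$ all of whose items have $a$-value at least $\frac{r}{4\alpha\beta}$.

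I expect the one real obstacle to be exactly this decoupling of the loop's halting test from value queries: one needs a test that is computable from data already in hand ($T$ and the clause $a$), that never halts while more of the relevant projection $S^{r'}$ could still be absorbed, and that still forces $|\mathcal T_{k,r}|=O(\alpha\sqrt n)$. The observation that $|T'|$ --- already the quantity controlled by both halves of Claim~\ref{subsets} --- serves all three purposes makes this routine; everything else is the bookkeeping over the $O(\log^2 n)$ pairs $(k,r)$.
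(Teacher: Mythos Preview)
Your argument is correct and follows the same skeleton as the paper's proof: count the $O(\log^2 n)$ pairs $(k,r)$, invoke the second half of Claim~\ref{subsets} to bound the number of while--loop iterations per pair by $\frac{4\alpha\beta\sqrt n}{2\beta-1}=O(\alpha\sqrt n)$, and observe that the only value queries needed are the $n$ singleton values.

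Where you differ is in rigor rather than route. The paper's own proof simply asserts that ``the additional value queries the algorithm uses are to determine the heavy items set and to return the value of every singleton,'' without saying what becomes of the test $v(T)\ge \frac{kr}{2\alpha}$ in Step~\ref{loop}. You actually construct the ``implementation'' promised by the lemma, replacing that test by the cardinality condition $|T'|\ge \frac{(2\beta-1)\sqrt n}{4\alpha\beta}$, which is computable from $T$ and the clause $a$ alone. Your verification that both halves of Claim~\ref{subsets} survive this change is correct: the disjointness argument gives the same upper bound on $|\mathcal T_{k,r}|$ directly from the new threshold, and whenever fewer than $k'/2$ items of $S^{r'}$ have been consumed the set $T$ returned by $CARD(\cdot)$ still satisfies $v(T)\ge \frac{k'r'}{2\alpha}$, so the light--item calculation in the proof of Claim~\ref{subsets} forces $|T'|$ above the threshold and the loop continues. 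The retrieval upper bound $\tilde v(S)\le v(S)$ is indeed untouched, since it relies only on each stored $T'$ consisting of items with $a$-value at least $\frac{r}{4\alpha\beta}$. In short, you supply the missing detail that justifies the paper's $O(n)$ claim uniformly in $\alpha$.
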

\begin{proof}
Stage \ref{iterations} of the algorithm performs $k\cdot r$ iterations, where $k=O\left(\log\left(n\right)\right)$ and $r=O\left(\log\left(n\right)\right)$. We claim that the loop in \ref{loop} iterates $\frac{4\alpha\cdot \beta\cdot \sqrt{n}}{2\beta-1}$ times: we know from Claim \ref{subsets} that for every $k',r'$, $|\mathcal T_{k',r'}|\leq \frac{4\alpha\cdot \beta\cdot \sqrt{n}}{2\beta-1}$ and this is also the maximal number of iterations of the loop. Overall the number of calls to the oracles $CARD(\cdot)$ and $\beta-XOS(\cdot)$ is indeed $O(\alpha \cdot \sqrt{n}\cdot \log^2 n)$. The additional value queries the algorithm uses are to determine the "heavy" items set and to return the value of every singleton, which overall requires $n$ singleton value queries.
\end{proof}

\begin{lemma}
For every well bounded subadditive valuation $v$, the algorithm returns an $O(\alpha\cdot \beta\cdot  \sqrt{n} \cdot \log n)$-sketch of $v$.
\end{lemma}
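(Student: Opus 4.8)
The plan is to verify directly that the value $\tilde v(S):=\max\{max\text{-}singleton,\,max\text{-}intersection\}$ returned by the reconstruction procedure satisfies $\tfrac{v(S)}{O(\alpha\beta\sqrt n\log n)}\le\tilde v(S)\le v(S)$ for every bundle $S$; rescaling $\tilde v$ by the constant implicit in the upper bound then yields a sketch with the claimed ratio. The upper bound is routine, so the real work is the lower bound, which combines Claim \ref{core} (to locate a ``core'' of $S$ carrying an $\Omega(1/(\beta\log n))$ fraction of $v(S)$) with the counting already performed in Claim \ref{subsets}.

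For the upper bound, $max\text{-}singleton=\max_{j\in S}v(\{j\})\le v(S)$ by monotonicity. For $max\text{-}intersection$, fix $k,r$ and a set $T\in\mathcal T_{k,r}$: by Step \ref{remove_light} of the construction, $T$ is contained in some bundle $T_0$ whose approximate clause $a=\beta\text{-}XOS(T_0)$ satisfies $a(\{j\})\ge\frac{r}{4\alpha\beta}$ for every $j\in T$. Since $T\cap S\subseteq T_0$, the second condition in the definition of $\beta$-XOS gives $v(S)\ge v(T\cap S)\ge\sum_{j\in T\cap S}a(\{j\})\ge|T\cap S|\cdot\frac{r}{4\alpha\beta}$, so every term in the maximum defining $max\text{-}intersection$ is at most $v(S)$.

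For the lower bound, fix $S$ with approximate clause $a$ and let $C(S)=S^{r'}$ be its core; by Claim \ref{core}, $a(C(S))\ge\frac{v(S)}{2\beta\log n}$, and since every $i\in C(S)$ has $r'\le a(i)<2r'$ we get $r'\,|C(S)|\le a(C(S))<2r'\,|C(S)|$. If $|C(S)|\le\sqrt n$, then any $i\in C(S)$ satisfies $v(\{i\})\ge a(\{i\})\ge r'\ge\frac{a(C(S))}{2|C(S)|}\ge\frac{v(S)}{4\beta\sqrt n\log n}$, so $max\text{-}singleton$ already suffices. Otherwise $|C(S)|>\sqrt n$; let $k\in\{\sqrt n,2\sqrt n,\dots,n\}$ be the largest grid value with $k\le|C(S)|$, so $\tfrac{|C(S)|}{2}<k\le|C(S)|$, and consider the iteration with this $k$ and $r=r'$. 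If some $i\in C(S)$ lies in $H_{k,r'}$, then $v(\{i\})\ge\frac{kr'}{\sqrt n}>\frac{|C(S)|\,r'}{2\sqrt n}\ge\frac{a(C(S))}{4\sqrt n}\ge\frac{v(S)}{8\beta\sqrt n\log n}$, and again $max\text{-}singleton$ suffices. In the remaining case $C(S)\subseteq L_{r'}$, so the initial set $N'=L_{r'}$ of that iteration contains a size-$k$ subset $S'\subseteq C(S)$, each item of which has $a$-value at least $r'$ and which is a subset of $S$. Repeating the argument of Claim \ref{subsets} with $S'$ (and the grid value $k$) in place of a full projection: as long as fewer than $k/2$ items of $S'$ have been moved into $\mathcal T_{k,r'}$, at least $k/2$ items of $S'$ still lie in $N'$ and form a candidate set of value at least $\frac{kr'}{2}$ for $CARD(\cdot)$, so the loop of Step \ref{loop} does not stop; hence at termination $|S'\cap\bigcup_{T\in\mathcal T_{k,r'}}T|\ge\frac{k}{2}$. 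Since $|\mathcal T_{k,r'}|=O(\alpha\beta\sqrt n)$ by Claim \ref{subsets}, some single $T\in\mathcal T_{k,r'}$ has $|S\cap T|\ge|S'\cap T|=\Omega\!\left(\frac{k}{\alpha\beta\sqrt n}\right)$; multiplying by the weight $\frac{r'}{4\alpha\beta}$ and using $kr'\ge\frac{|C(S)|\,r'}{2}\ge\frac{a(C(S))}{4}\ge\frac{v(S)}{8\beta\log n}$ gives $max\text{-}intersection=\Omega\!\left(\frac{v(S)}{\alpha\beta\sqrt n\log n}\right)$, which finishes the lower bound and hence the lemma.

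The step I expect to be the main obstacle is reconciling the core with the heavy/light split: $CARD(\cdot)$ is run only on the light items, so one must argue that the good core is not destroyed by discarding the heavy ones. The case analysis above is precisely this reconciliation --- either $C(S)$ is small, or it contains a heavy item (both captured by $max\text{-}singleton$), or it is entirely light, in which case Claim \ref{subsets} applies after replacing the projection by an arbitrary size-$k$ subset (the factor-$2$ gap between $k$ and $|C(S)|$ being absorbed into the $O(\cdot)$). A second point deserving care is the bookkeeping of the oracle-approximation factor $\alpha$, which enters both through the bound on $|\mathcal T_{k,r}|$ and through the weight $\frac{r}{4\alpha\beta}$ used in $max\text{-}intersection$.
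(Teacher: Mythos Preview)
Your proof follows the paper's approach: pass to the core $C(S)$ via Claim \ref{core}, dispose of the small-core and heavy-item cases with $max\text{-}singleton$, and in the remaining light case combine the two parts of Claim \ref{subsets} with pigeonhole to lower-bound $max\text{-}intersection$. You are in fact more careful than the paper in two places---you supply the upper-bound verification $\tilde v(S)\le v(S)$ (which the paper omits) and you explicitly round $|C(S)|$ down to a grid value $k$ and re-run the Claim \ref{subsets} argument on a size-$k$ subset $S'\subseteq C(S)$ (the paper tacitly treats $k'=|C(S)|$ as if it were one of the iterated values); your worry in the last paragraph about $\alpha$ entering twice is well founded, and indeed your displayed chain yields $\Omega(v(S)/(\alpha^2\beta^2\sqrt n\log n))$ rather than the claimed $\Omega(v(S)/(\alpha\beta\sqrt n\log n))$, but the paper's own proof ends with the same $\alpha^2\beta^2$ factor and hence shares this discrepancy with the lemma statement.
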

\begin{proof}
Fix a bundle $S$. By Lemma \ref{core} the core of $S$ gives us a $\log\left(n\right)$-approximation to $v(S)$, and we claim that the algorithm constructs sketches of each $r$-projection of $S$, and of $C\left(S\right)$ in particular. Let $|C\left(S\right)|=k'$ and $r'$ be the projection index of the core set, then $k'\cdot r' \leq a\left(C\left(S\right)\right)\leq 2\cdot k'\cdot r'$, where $a$ is the additive valuation which is an approximate XOS clause of $S$. Note that if $H_{k',r'}\cap S\neq \emptyset$ then there exists a singleton that gives $O\left(\sqrt{n}\right)$-approximation for $a\left(C\left(S\right)\right)$, and its value will be returned in stage 2 of the algorithm (observe that if $k'<\sqrt{n}$ then by the pigeonhole principle and subadditivity such ``heavy'' item must exist).

If $H_{k',r'}\cap S= \emptyset$, Claim \ref{subsets} gives us that $|C\left(S\right)\cap\left(\underset{i}{\bigcup}T_{i}\right)|\geq \frac{k'}{2}$ and also that $|\mathcal T_{k',r'}|\leq\frac{4\alpha\cdot \beta\cdot \sqrt{n}}{2\beta-1}$. Thus, by the pigeonhole principle, $\exists T_{i}\in\mathcal T_{k',r'}$ such that $|T_{i}\cap C\left(S\right)|\geq \frac{\frac{k'}{2}}{\frac{4\alpha\cdot \beta\cdot \sqrt{n}}{2\beta-1}}=\frac{\left(2\beta-1\right)\cdot k'}{8\alpha\cdot \beta\cdot \sqrt{n}}$. Observe that the value that the algorithm outputs is at least $|T_{i}\cap C\left(S\right)|\cdot\frac{r'}{4\alpha\cdot\beta} \geq \frac{\left(2\beta-1\right)\cdot k'}{8\alpha\cdot \beta\cdot \sqrt{n}}\cdot\frac{r'}{4\alpha\cdot\beta}\geq\frac{\left(2\beta-1\right)\cdot \frac{a\left(C\left(S\right)\right)}{2}}{32\alpha^{2}\cdot \beta^{2}\cdot \sqrt{n}}\geq \frac{ v\left(S\right)}{\frac{128\alpha^{2}\cdot\beta^{3}}{2\beta-1}\cdot \sqrt{n}\cdot \log\left(n\right)}$.
\end{proof}

\subsubsection*{Applications}
By providing specific efficient implementations for the oracles used in the algorithm, we get our main results:

\begin{corollary}\label{cor-submodular}
There exists an algorithm that produces an $\tilde O(\sqrt n)$ sketch of every monotone submodular valuation $v$ using only $O\left(n^\frac{3}{2}\cdot \log^{3}\left(n\right)\right)$ value queries.
\end{corollary}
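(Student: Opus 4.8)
The plan is to invoke the generic algorithm of Section~3 with $\beta=1$ --- legitimate since every monotone submodular function is XOS \cite{LLN01} --- equipping it with query-efficient, constant-factor implementations of the two oracles $CARD(\cdot)$ and $\beta$-XOS$(\cdot)$, and then to pass from well bounded to arbitrary monotone submodular valuations via Lemma~\ref{lemma-well-bounded-reduction}.

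For $CARD(N',k)$ I would use the threshold-greedy maximizer of Badanidiyuru and Vondr\'ak \cite{BV14}: on a monotone submodular objective it returns a $(1-1/e-\epsilon)$-approximate solution to $\max_{S\subseteq N',\,|S|\le k} v(S)$, so $\alpha=\frac{1}{1-1/e-\epsilon}=O(1)$, using only $O\!\left(\frac{n}{\epsilon}\log\frac{n}{\epsilon}\right)=O(n\log n)$ value queries for a fixed constant $\epsilon$. (Plain greedy achieves $\alpha=\frac{e}{e-1}$ as well, but costs $\Theta(nk)$ per call, which is too much when $k=\Theta(n)$.) For $\beta$-XOS$(S)$ with $\beta=1$ I would use the standard marginal clause: fix any order $j_1,\dots,j_m$ of $S$ and set $a(\{j_i\})=v(\{j_1,\dots,j_i\})-v(\{j_1,\dots,j_{i-1}\})$; submodularity yields $a(S)=v(S)$ and $v(T)\ge\sum_{j\in T}a(\{j\})$ for all $T\subseteq S$, which are precisely the two conditions required of an approximate XOS clause with $\beta=1$, at a cost of $O(|S|)=O(n)$ value queries.

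With $\alpha=O(1)$ and $\beta=1$, the two lemmas of Section~3 give an $O(\alpha\beta\sqrt n\log n)=O(\sqrt n\log n)=\tilde O(\sqrt n)$-sketch (the constant $\tfrac{128\alpha^2\beta^3}{2\beta-1}$ collapses to $O(1)$) and $O(\alpha\sqrt n\log^2 n)=O(\sqrt n\log^2 n)$ oracle calls; multiplying by the $O(n\log n)$ cost of a $CARD$ call --- which dominates the $O(n)$ cost of a $\beta$-XOS call --- and adding the $O(n)$ auxiliary value queries of Step~2 bounds the query complexity on well bounded submodular valuations by $O(n^{3/2}\log^3 n)$. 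Since restricting a monotone submodular function to a subset keeps it monotone and submodular (and well bounded, by construction of the $T_i$), Lemma~\ref{lemma-well-bounded-reduction} applies: it runs this algorithm on sets $T_1,\dots,T_n$ with $\sum_i|T_i|=O(n)$, and as the per-instance cost is superlinear, $\sum_i|T_i|^{3/2}\log^3|T_i|\le\log^3 n\,(\sum_i|T_i|)^{3/2}=O(n^{3/2}\log^3 n)$, so the bound survives for every monotone submodular valuation.

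The delicate point is the query accounting for $CARD$: ordinary greedy is too slow for large cardinality bounds, so the whole estimate hinges on using an approximate maximizer with near-linear query complexity, and on verifying that its $(1-1/e-\epsilon)$ guarantee fits the role of $\alpha$ in Claim~\ref{subsets} and in the two lemmas, where $\alpha$ appears only as a multiplicative slack and any constant value is harmless.
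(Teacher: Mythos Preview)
Your proposal is correct and follows essentially the same approach as the paper: instantiate the generic algorithm with the Badanidiyuru--Vondr\'ak threshold-greedy maximizer for $CARD(\cdot)$ (giving $\alpha=O(1)$ in $O(n\log n)$ value queries) and the marginal-value clause for $\beta$-XOS (giving $\beta=1$ in $O(n)$ queries), then combine Lemma~\ref{queries_number} with the approximation lemma and lift to general valuations via Lemma~\ref{lemma-well-bounded-reduction}. Your accounting for the reduction step (summing $|T_i|^{3/2}$ over the overlapping pieces) is a bit more explicit than the paper's, but the argument is the same.
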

\begin{proof}
We first prove for a well-bounded valuation $v$. We use a recent result \cite{BV14} that presents an implementation for the oracle $CARD(\cdot)$ for monotone submodular functions with $\alpha=\frac{1}{1-\frac{1}{e}-\epsilon}$ while using $O\left(\frac{n}{e}\cdot \log\left(\frac{n}{e}\right)\right)$ value queries\footnote{Another possibility is using the classic greedy algorithm \cite{NWF78} which gives $\left(1-\frac{1}{e}\right)$-approximation, but its running time is $O\left(n^2\right)$ which results in a slower overall running time.}.
In addition, the implementation of the $\beta-XOS(\cdot)$ oracle for monotone submodular functions is straightforward - given a bundle $S$, we return an additive function $a$ that assigns every item in $S$ with its marginal value, that is: $v\left(i_{1}\right)=v\left(S\right)-v\left(S\backslash \{i_{1}\}\right),\ldots ,v\left(i_{j}\right)=v\left(S\backslash \underset{k<j} \bigcup{i_{k}}\right)-v\left(S\backslash \left(\underset{k<j} \bigcup{i_{k}}\cup\{i_{j}\}\right)\right)$, and get an XOS clause with $\beta=1$ \cite{DNS05}. That requires additional $O\left( n\right)$ value queries. Combining with claim 3.1, we conclude that our algorithm makes $O\left(n^\frac{3}{2}\cdot \log^{3}\left(n\right)\right)$ value queries, and achieves an approximation ratio of $O(\sqrt n \cdot \log n)$. The result for any valuation follows by applying Lemma \ref{lemma-well-bounded-reduction}.
\end{proof}

We do not know whether we can construct sketches for general submodular functions using only $\tilde O(n)$ value queries. However, if $v$ is further known to be a matroid rank function we are able to sketch it using only $\tilde O(n)$ value queries. We postpone this result to the appendix.

\begin{corollary}
There exists an algorithm that produces an $\tilde O(\sqrt n)$ sketch of every subadditive valuation $v$ using $O(n)$ demand and value queries.
\end{corollary}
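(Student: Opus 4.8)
The plan is to follow the same route as the proof of Corollary~\ref{cor-submodular}: first handle a well-bounded subadditive valuation $v$ by supplying demand-query implementations of the two oracles used by the algorithm, and then remove the well-boundedness assumption via Lemma~\ref{lemma-well-bounded-reduction}. Since every subadditive function is $O(\log n)$-XOS~\cite{D07}, we work with $\beta=O(\log n)$. Once implementations of $CARD(\cdot)$ and of $\beta-XOS(\cdot)$ are fixed, Lemma~\ref{queries_number} bounds the number of oracle calls by $O(\alpha\cdot\sqrt n\cdot\log^2 n)$ on top of $O(n)$ value queries, and the lemma following it shows that the algorithm outputs an $O(\alpha\cdot\beta\cdot\sqrt n\cdot\log n)$-sketch. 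So it suffices to implement both oracles with $\alpha=O(1)$ using only $\mathrm{polylog}(n)$ demand and value queries per call: the oracle calls then contribute $O(\sqrt n\cdot\mathrm{polylog}(n))=O(n)$ queries, the approximation ratio becomes $O(\sqrt n\cdot\log^2 n)=\tilde O(\sqrt n)$, and because the resulting algorithm makes $O(m)$ queries on any instance of size $m$, Lemma~\ref{lemma-well-bounded-reduction} converts it into an $\tilde O(\sqrt n)$-sketch of an arbitrary subadditive valuation using $O(n)$ demand and value queries.

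I would implement $\beta-XOS(S)$ with the procedure underlying the fact that subadditive functions are $O(\log n)$-XOS~\cite{D07,DNS05}: charge a uniform price $t$ to the items of $S$ and a prohibitively large price to every other item, and make one demand query for each $t$ in the geometric grid $1,2,4,\dots,n^2$ (well-boundedness makes this range sufficient, since after the reduction $1\le v(\{j\})\le n^2$). The nested demanded sub-bundles of $S$ determine an additive function $a$ on $S$ --- set $a(\{j\})$ to the largest $t$ for which $j$ is still demanded --- which is dominated by $v$ on every subset of $S$, while a pigeonhole argument over the $O(\log n)$ price levels, in the spirit of Claim~\ref{core}, gives $a(S)\ge v(S)/O(\log n)$; this is an approximate XOS clause of $S$ obtained with $O(\log n)$ demand queries. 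The crux is the $CARD(N',k)$ oracle, since a demand query returns a profit-maximizing bundle whose \emph{size we do not control}, whereas the bound on the value lost in Step~\ref{remove_light}, used in the proof of Claim~\ref{subsets}, needs the returned bundle to contain at most $k$ items and subadditivity forbids lossless truncation. I would proceed in two phases. First, since $v$ is well bounded, $\mathrm{OPT}:=\max_{S\subseteq N',|S|\le k}v(S)$ lies in $[1,n^3]$, so it is enough to try the $O(\log n)$ guesses $g\in\{1,2,4,\dots,n^3\}$ and use the one with $g\le\mathrm{OPT}\le 2g$; for this $g$ put the uniform price $p=g/(2k)$ on $N'$ (and a prohibitive price elsewhere) and make one demand query, obtaining $D\subseteq N'$ with $v(D)-p|D|\ge\mathrm{OPT}-pk\ge\mathrm{OPT}/2$, hence $v(D)\ge\tfrac{\mathrm{OPT}}{2}+p|D|$. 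Second, while the current bundle has more than $k$ items, split it into two halves and keep the more valuable one; by subadditivity the value drops by at most a factor of $2$ per step, so after at most $\log n$ steps we reach a bundle $D'$ with $|D'|\le k$ and $v(D')\ge v(D)\cdot\tfrac{k}{2|D|}\ge\tfrac{\mathrm{OPT}}{8}$, where the last step uses the term $p|D|=|D|\cdot\tfrac{g}{2k}\ge|D|\cdot\tfrac{\mathrm{OPT}}{4k}$ to absorb the shrinkage. Returning the best $D'$ over all guesses is then a valid $CARD$ answer with $\alpha=8$, computed with $O(\log n)$ demand queries and $O(\log^2 n)$ value queries.

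Plugging these two implementations into Lemma~\ref{queries_number} and the approximation lemma yields an $\tilde O(\sqrt n)$-sketch of every well-bounded subadditive $v$ using $O(n)$ demand and value queries --- the $O(\sqrt n\cdot\mathrm{polylog}(n))$ oracle-query cost being dominated by the $O(n)$ value queries --- and Lemma~\ref{lemma-well-bounded-reduction} extends this to every subadditive valuation at the cost of only a constant factor in the approximation and $O(n)$ additional value queries. The step I expect to require the most care is the $CARD$ oracle: one must verify that the halving phase loses only a constant factor once the profit term $p|D|$ is credited back, and that the clause returned by the $\beta-XOS$ oracle genuinely satisfies \emph{both} conditions in the definition of $\beta$-XOS --- in particular that $a(T)\le v(T)$ for \emph{every} $T\subseteq S$, not merely for $T=S$ --- since these are exactly the properties that Claims~\ref{core} and~\ref{subsets} invoke.
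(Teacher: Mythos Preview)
Your proposal is correct and follows the paper's approach exactly: instantiate the generic algorithm with demand-query oracles achieving $\alpha=O(1)$ and $\beta=O(\log n)$, apply Lemma~\ref{queries_number} and the approximation lemma, and then lift the well-boundedness assumption via Lemma~\ref{lemma-well-bounded-reduction}. The only difference is cosmetic: the paper simply cites \cite{DPS11} for $CARD$ (with $\alpha=2$ and $O(\log n)$ demand queries) and \cite{D07} for the $\beta$-XOS oracle, whereas you reconstruct versions of these implementations yourself---your halving-based $CARD$ is correct with $\alpha=8$, and for $\beta$-XOS you rightly point to \cite{D07}, though (as you yourself flag) the claim that the demanded sub-bundles are nested and that $a(T)\le v(T)$ for all $T\subseteq S$ is not established by your sketch and really does need the argument in \cite{D07}.
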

\begin{proof}
We first prove the result for a well-bounded valuation $v$. For the implementation of the oracle $CARD(\cdot)$ we use a former result \cite{DPS11} that gives an approximation of $\alpha =2$ while using $O\left(\log\left(n\right)\right)$ demand queries. The oracle $\beta-XOS(\cdot)$ can be implemented using a construction that returns an approximate XOS clause with $\beta =O(\log\left(n\right))$ while using $O\left(\log\left(n\right)\right)$ demand queries \cite{D07}. We therefore achieve an approximation ratio of $ O(\sqrt{n}\cdot \log^3 n)$ using $O\left(\sqrt{n}\cdot \log^3\left(n\right)\right)$ demand queries and $O(n)$ value queries. The result for a general valuation $v$ follows by applying Lemma \ref{lemma-well-bounded-reduction}.
\end{proof}

\bibliographystyle{plain}
\bibliography{bib}

\appendix

\section{Fast Sketches for Matroid Rank Functions}

If $v$ is a matroid rank function, we obtain a $\Omega(\sqrt n)$-sketch of it with only $\tilde O(n)$ value queries.

\begin{proposition}
There exists an algorithm that produces an $\tilde O(\sqrt n)$ sketch of a matroid rank function $v$ using only $O\left(n\cdot \log^2\left(n\right)\right)$ value queries.
\end{proposition}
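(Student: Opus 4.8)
The plan is to run the algorithm of Section~3 unchanged, supplying it with an exact and fast implementation of $CARD(\cdot)$ that exploits the matroid structure, and then to recount the value queries more carefully than Lemma~\ref{queries_number} does (that lemma charges every oracle call a unit cost). First I would preprocess with $n$ value queries: record $v(\{j\})$ for all $j$ and discard the loops (the $j$ with $v(\{j\})=0$), which is harmless since loops do not affect the rank of any set, and note that on the surviving ground set every singleton has value $1$, so $v$ is well bounded (alternatively one can invoke Lemma~\ref{lemma-well-bounded-reduction}). The oracle $\beta-XOS(\cdot)$ is essentially free here: $v$ is submodular, so the marginal-values construction used in Corollary~\ref{cor-submodular} gives an XOS clause with $\beta=1$, and since every set $T$ that our $CARD(\cdot)$ returns will be independent, all of its marginals equal $1$, so no further queries are needed; we may also take $\alpha=1$.

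The crux is a fast exact $CARD(N',k)$: it must return an independent $T\subseteq N'$ with $|T|=\min(k,r(N'))=\max_{S\subseteq N',\,|S|\le k}v(S)$. I would build $T$ greedily but locate each new element by binary search. Fix any ordering $e_1,\dots,e_m$ of $N'$; for the current independent set $I$ the numbers $r(I\cup\{e_1,\dots,e_i\})$ are nondecreasing in $i$ and rise by at most one at each step, so a binary search on $i$ --- one value query per probe --- finds the least index at which the rank increases, and at that index $e_i\notin\mathrm{span}(I)$, so $I\cup\{e_i\}$ is independent. Add $e_i$, recurse on the remaining suffix, and stop when $|I|=k$ or when no increase is possible (that is, $I$ already spans $N'$). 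This uses $O\big((\min(k,r(N'))+1)\log n\big)=O(k\log n)$ value queries.

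To total the cost, fix a pair $(k,r)$ from Step~\ref{iterations}. The loop in Step~\ref{loop} makes one $CARD(\cdot)$ call to test its condition and then one more per productive iteration; a productive iteration removes from $N'$ an independent set that, by the while-test with $\alpha=1$, has size at least $kr/2$, so there are at most $2n/(kr)$ of them (which also shows that the loop terminates). Since each $CARD(\cdot)$ call costs $O(k\log n)$ queries, the pair $(k,r)$ contributes $O\big(k\log n+\tfrac{n\log n}{r}\big)$. Summing over $r=1,2,4,\dots,n^2$ (geometric, so $\sum_r 1/r=O(1)$) and over the $O(\log n)$ values $k=\sqrt n,2\sqrt n,\dots,n$ (for which $\sum_k k=O(n)$, again geometric) gives $O(n\log^2 n)$ value queries, on top of the $O(n)$ preprocessing ones. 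Since $\alpha=\beta=1$, the approximation analysis of Section~3 certifies that the output is an $O(\sqrt n\log n)=\tilde O(\sqrt n)$-sketch, as claimed.

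The step I expect to dominate the difficulty is this binary-search $CARD(\cdot)$ and the verification that the prefix ranks $r(I\cup\{e_1,\dots,e_i\})$ are monotone with unit increments; this matroid-specific property is exactly what makes a single $CARD(\cdot)$ call cost $O(k\log n)$ rather than the $\Omega(n)$ per call that seems inherent for general submodular functions. The other thing to state carefully is the two geometric sums in the count --- $\sum_k O(k\log n)=O(n\log n)$ for the test calls and $\sum_r 1/r=O(1)$ for the productive iterations --- since it is precisely because the per-call cost grows like $k$ while the iteration count shrinks like $1/k$ that the $\tilde O(\sqrt n)$ bound on the number of $CARD(\cdot)$ calls does not inflate the running time to $\tilde O(n^{3/2})$.
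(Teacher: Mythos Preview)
Your proposal is correct and follows essentially the same approach as the paper: use the Section~3 algorithm with $\alpha=\beta=1$, implement $CARD(\cdot)$ by greedily building an independent set and locating each new element via an $O(\log n)$ binary search, and observe that the $\beta$-XOS oracle is trivial since the returned sets are independent. The only cosmetic difference is in the bookkeeping: the paper first notes that for a matroid rank function only $r=1$ is relevant (all marginals are in $\{0,1\}$), so Step~\ref{iterations} collapses to $O(\log n)$ iterations, and then bounds the total $CARD(\cdot)$ cost per $k$ by $\sum_i |T_i|\log n\le n\log n$ using disjointness; you instead keep all $r$ and bound productive iterations by $2n/(kr)$, relying on the geometric sums $\sum_k k=O(n)$ and $\sum_r 1/r=O(1)$ to reach the same $O(n\log^2 n)$ total.
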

\begin{proof} (sketch) First, notice that every matroid rank function is well bounded so it is enough to consider this case only. We use the same implementation of the $1$-XOS oracle as in Corollary \ref{cor-submodular} and provide a fast implementation of $CARD(\cdot)$ with an approximation ratio of $1$. Given a set $T$, our implementation finds whether there exists an item $j$ that can be added to $T$ with a marginal value $v(\{j\}+T)-v(T)=1$. Starting from the empty set, $CARD(\cdot)$ will keep adding items to $T$ until there is no item with a marginal of $1$, or until $|T|=k$. Finding an item with a marginal $1$ can be done with $O(\log n )$ queries: consider the marginal value of $N-T$: if it is $0$, by monotonicity there is no item that can be added to $T$ with a marginal of $1$. Else, partition the set of items $N-T$ into two equal sized sets, find the one with the positive marginal value, and recurse until a single item $j$ with a positive marginal value is found. The number of queries we made to find $j$ is $O(\log n)$. The total number of queries that $CARD(\cdot)$ makes is $O(k\cdot \log n)$. Correctness of the algorithm follows from the augmentation property of the matroid: as long as there is an independent set $I$ with $|I|>|T|$ by the augmentation property of matroids there must be an item $j$ such that $T\cup \{j\}$ is an independent set, and thus the marginal value of $j$ given $T$ is $1$.


Observe that Step \ref{iterations} of the sketching algorithm will only perform $O\left(k\right)=O\left(\log\left(n\right)\right)$ iterations, since $r\in\{0,1\}$. In addition, we can remove stage \ref{remove_light} which is now redundant, as the current implementation of $CARD(\cdot)$ never adds items with marginal value of 0.
The overall running time of the loop in \ref{loop} is $O\left(n\cdot \log\left(n\right)\right)$: the running time of each iteration of $CARD(\cdot)$ is $O\left(d_{i}\cdot \log\left(n\right)\right)$, where $d_{i}$ is the rank of the remaining items in $N'$ on the $i^{th}$ iteration of \ref{loop}. Note that $\underset{i} \sum{d_{i}}\leq n$ since the sets are disjoint and clearly $\underset{T_{i}\in \mathcal T_{k}} \sum{|T_{i}|}\leq n$. We conclude that the running time of Step \ref{loop} is indeed $O\left(n\cdot \log\left(n\right)\right)$, and from Lemma \ref{queries_number} we know that the additional number of value queries used by the algorithm is $O\left(n\right)$, thus the algorithm uses a total of $O\left(n\cdot \log^2\left(n\right)\right)$ value queries and achieves an approximation ratio of $O(\sqrt{n}\cdot \log\left(n\right))$.
\end{proof} 

\end{document}